\newtheorem{thm}{Theorem}
\newtheorem{cor}{Corollary}
\theoremstyle{definition}
\numberwithin{equation}{section}
\newcommand{\argmax}{\operatornamewithlimits{argmax}}
\newcommand{\cB}{\mathcal{B}}
\begin{document}
\title[]{Nonparametric Counterfactuals in Random Utility Models}
\author[Kitamura]{Yuichi Kitamura$^*$}
\address{Cowles Foundation for Research in Economics, Yale University, New
Haven, CT 06520.}
\email{yuichi.kitamura@yale.edu}
\author[Stoye]{J\"org Stoye$^{**}$}
\address{Department of Economics, Cornell University, Ithaca, NY 14853.}
\email{stoye@cornell.edu}
\date{This Version: February 20, 2019}
\thanks{\emph{Keywords}: Stochastic Rationality}
\thanks{JEL Classification Number: C14}
\thanks{Stoye acknowledges support from the National Science Foundation under grants SES-1260980 and SES-1824375.}

\begin{abstract}
We bound features of counterfactual choices in the nonparametric random utility model of demand \cite{KS}, i.e. if observable choices are repeated cross-sections and one allows for unrestricted, unobserved heterogeneity. In this setting, tight bounds are developed on counterfactual discrete choice probabilities and on the expectation and c.d.f. of (functionals of) counterfactual stochastic demand.
\end{abstract}

\maketitle

\section{Introduction}

Consider the random utility model of demand analyzed in \citeasnoun{mcfadden-richter}, \citeasnoun{McFadden05}, and \citeasnoun{KS}: Repeated cross-sections of demand are observed on a finite sequence of budgets; the maintained assumption is that these cross-sections are of a population of individually rational (in the sense of maximizing utility) individuals; however, one does not substantively restrict utility functions nor their distribution, that is, one allows for unrestricted and possibly infinite dimensional unobserved heterogeneity.

\citeasnoun{McFadden05} characterizes the empirical content of this model in population. We build on his results to provide tight bounds on the distribution of counterfactual demand, i.e. of stochastic demand on as yet unobserved budgets, as well as explicit bounds on the expectation and c.d.f. of linear functions of demand vectors, e.g. demand for a specific good. Many of these bounds turn out to be the values of linear programs, hence are easy to compute even in moderately high dimensional applications.\footnote{Some of these results were reported in section 9.2 of \citeasnoun{KS13} and implemented at the time. We make them available not least because other work already built on them \cite{Adams16,Manski14}. Code is available from the authors. See also \citeasnoun{Adams16}, \citeasnoun{Hubner}, and \citeasnoun{smeulders} for recent results on computational implementation.} We next describe the setup and recall an important characterization of stochastic rationalizability, then provide the bounds, and close by mentioning some extensions.

\section{Stochastic Rationalizability}\label{sec:rational}

We use notation from \citeasnoun{KS}. There are $J$ observed budgets $\{\mathcal{B}_j\}_{j=1}^J, J \in \bf N$, each characterized by price vectors $p_j \in \mathbf{R}^K_+$, where expenditure is normalized to $1$:
\begin{equation*}
\mathcal{B}_{j}\equiv \{y \in \mathbf{R}^K_+:p_j'y=1\},\quad j=1,...,J.
\end{equation*}%
Suppose that we know a \textit{stochastic demand system}
\begin{equation*}
P_j(x)\equiv \Pr(y(p_j)\in x),\quad x\subset \mathbf{R}_{+}^{K}
\end{equation*}%
for $j=1,\dots,J$, where the random variable $y(p_j)$ is demand on budget $\mathcal{B}_j$. This collection of distributions is rationalizable by a random utility model if there exists a distribution $P_u$ over locally nonsatiated (for simplicity) utility functions $u:\mathcal{\mathbf{R}}_+^K \mapsto \mathbf{R}$ s.t. 
\begin{equation*}
P_j (x)=\int \! 1\Bigl\{\argmax_{y\in \mathbf{R}_+^K:p_j'y=1}u(y) \in x\Bigr\} \mathrm{d}P_u, \quad x\subseteq \mathcal{B}_j,j=1,...,J. 
\end{equation*}

Our motivation is demand estimation from repeated cross-section with unobserved heterogeneity, but the model has also been used to describe choices made by an individual with random utility. We next recall a succinct description of its empirical content.

Let $\mathcal{X} \equiv \{x_1,...,x_I\}$ be the coarsest partition of $\cup _{j=1}^J\mathcal{B}_j$ such that for any $i \in \{1,...,I\}$ and $j \in \{1,...,J\}$, $x_i$ is either completely on, completely strictly above, or completely strictly below budget plane $\mathcal{B}_j$. Equivalently, any $y_1,y_2 \in \cup _{j=1}^J\mathcal{B}_j$ are in the same element of the partition iff $\text{sg}(p_j'y_1-1)=\text{sg}(p_j'y_2-1)$ for all $j=1,...,J$. Elements of $\mathcal{X}$ will be called \textit{patches}. Each budget can be uniquely expressed as union of patches; the number of patches that jointly comprise budget $\mathcal{B}_j$ will be called $I_j$. For future reference, we emphasize that any patch is the intersection of finitely many open or closed half spaces and therefore its closure (though not necessarily the patch itself) is a finite polytope.

An important insight of the aforecited papers is that stochastic rationalizability constrains the aggregate choice probabilities \textit{of} patches, but not at all the distribution of demand \textit{on} any patch. Intuitively, this is because all choices that are on the same patch generate the same revealed preference information. Formally, let the \textit{vector representation} of $(\mathcal{B}_1,\dots,\mathcal{B}_J)$ be the $\bigl(\sum_{j=1}^J I_j\bigr)$-vector $(x_{1|1},\dots,x_{I_1|1},x_{1|2},\dots,x_{I_J|J})$, where $(x_{1|j},\dots,x_{I_j|j})$ lists all patches comprising $\mathcal{B}_j$ in arbitrary but henceforth fixed order.\footnote{Note that elements of $\mathcal{X}$ that appear as components of distinct budgets make corresponding repeat appearances, under different labels, in the vector representation.} Let the \textit{vector representation} of $(P_1,\dots,P_J)$ be the $\bigl(\sum_{j=1}^{J}I_{j}\bigr)$-vector $\pi \equiv (\pi_{1|1},\dots,\pi_{I_1|1},\pi_{1|2},\dots,\pi_{I_J|J})$, where $\pi_{i|j} \equiv P_j(x_{i|j})$. Next, note that any rationalizable nonstochastic demand system can be thought of as a degenerate stochastic demand system with binary vector representation. A stochastic demand system is rationalizable iff it is a mixture of such rationalizable nonstochastic demand systems because the latter can be thought of as representing choice types in the population. But due to the discretization of the choice universe into patches, there are only finitely many such types. Collect their vector representations in the $H < \infty$ columns of the \textit{rational demand matrix} $A$: see \citeasnoun{KS} (in particular Definition 3.5 and discussions in Sections 3.2 - 3.4). Then we have:\footnote{The statement follows \citeasnoun{KS}, who also prove it, provide algorithms for computing $A$, and point out that $\nu \in \Delta^{H-1}$ can be conveniently weakened to $\nu \geq 0$. However, the discretization step is clearly anticipated in \citeasnoun{McFadden05}, and the result was otherwise proved in \citeasnoun{mcfadden-richter}. See also \citeasnoun{Stoye18}.}

\begin{thm}\label{T1}
The stochastic demand system $(P_1,\dots,P_J)$ is rationalizable if, and only if, its vector representation $\pi$ fulfills $\pi =A\nu$ for some $\nu \in \Delta^{H-1}$. Here, $\Delta^{H-1}$ is the unit simplex in ${\bf R} ^H$.
\end{thm}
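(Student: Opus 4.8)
The plan is to prove the two implications separately, in each case exploiting the correspondence between utility functions and the columns of $A$. The organizing observation, which the setup supplies, is that a locally nonsatiated $u$ determines, for each budget $\mathcal{B}_j$, a single maximizer that lies in exactly one patch; recording these choices across budgets yields a binary vector, and by construction the rationalizable such vectors are precisely the columns of $A$. Rationalizability of a stochastic system then reduces to its being a mixture of these deterministic patterns, which is exactly the statement $\pi = A\nu$ with $\nu$ a probability vector. So the whole argument hinges on turning ``mixture of rational deterministic types'' into the linear-algebraic identity.

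For the ``only if'' direction, I would start from a distribution $P_u$ that rationalizes $(P_1,\dots,P_J)$. To each $u$ I associate the binary vector $a(u)$ listing the patch it selects on each budget; since $u$ maximizes utility, $a(u)$ respects the revealed-preference restrictions and hence equals some column $A_h$ of $A$. Setting $\nu_h\equiv P_u(\{u:a(u)=A_h\})$ produces a vector $\nu\in\Delta^{H-1}$, and integrating the defining identity for $P_j$ against this partition of the utility space gives, for every patch,
\begin{equation*}
\pi_{i|j}=P_j(x_{i|j})=\int 1\Bigl\{\argmax_{y\in\mathbf{R}_+^K:p_j'y=1}u(y)\in x_{i|j}\Bigr\}\,\mathrm{d}P_u=\sum_{h=1}^{H}A_{(i|j),h}\,\nu_h=(A\nu)_{i|j},
\end{equation*}
so that $\pi=A\nu$. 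The only care needed is for utility functions with non-unique maximizers or with maximizers on patch boundaries; since those boundaries lie on the budget planes that define the partition, and the model already reads off demand through $\argmax$, these are absorbed into the measure-zero/tie-breaking convention implicit in the setup.

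For the ``if'' direction, I would begin with $\pi=A\nu$ and note that each column $A_h$ is, by definition, the vector representation of a rationalizable deterministic system, so there is a utility function realizing its prescribed patch on every budget. The coarse mixture placing mass $\nu_h$ on one such function already reproduces the patch masses $\pi$. To rationalize the \emph{full} system $(P_1,\dots,P_J)$ rather than merely its patch probabilities, I would exploit the fact that rationality places no restriction on the distribution of demand \emph{within} a patch: for each type $h$ and budget $j$ I would replace the single utility by a family that all induce the same pattern $A_h$ but whose maximizers sweep out the entire chosen patch, and mix over this family so that the conditional demand matches $P_j$ restricted to that patch. Aggregating over types then reproduces each $P_j$ exactly and exhibits the required $P_u$.

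I expect the main obstacle to be precisely this within-patch reconstruction in the ``if'' direction: one must show that a given rational type can be realized by utility functions whose maximizers sit at an arbitrary prescribed point of each selected patch, simultaneously across all $J$ budgets, and that this can be arranged measurably in the target point so the mixing integral is well defined. This simultaneous, measurable realization is where the real content lies; by contrast the ``only if'' direction and the bookkeeping that converts patch masses into the identity $\pi=A\nu$ are routine.
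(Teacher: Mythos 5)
Your outline follows the same route the paper itself takes: the paper does not actually prove Theorem~\ref{T1} --- footnote 2 defers the proof to \citeasnoun{KS} --- and its in-text justification is precisely your organizing observation, namely that a stochastic demand system is rationalizable iff it is a mixture of the finitely many rationalizable deterministic patch-choice patterns whose vector representations form the columns of $A$. Your ``only if'' direction is sound modulo routine measurability of the map $u \mapsto a(u)$. The problem is the ``if'' direction: you explicitly leave its crux --- that a rational type can be realized by utility functions whose maximizers sit at \emph{arbitrary} prescribed points of the selected patches, simultaneously across all $J$ budgets, and measurably in those points --- as an ``expected obstacle'' rather than proving it. Identifying where the real content lies is not the same as supplying it; as written, this is an honest proof plan with its central step missing.

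The idea that closes the gap is the patch-invariance of revealed preference, which you state at the outset but never deploy. Whether a selection $(y_1,\dots,y_J)$ with $y_j \in x(h,j)$ satisfies the relevant revealed-preference axiom depends only on the signs $\mathrm{sg}(p_j'y_{j'}-1)$, and these are constant on patches by construction of $\mathcal{X}$; hence if \emph{one} selection from the pattern $A_h$ is rationalizable (which is what membership of $A_h$ in $A$ gives you), \emph{every} selection from those patches satisfies the same axiom, and a Richter/Afriat-type rationalization theorem then produces a locally nonsatiated utility whose demand is that selection. The required $P_u$ is the pushforward, under this assignment, of any coupling of the conditionals $P_j(\cdot \mid x(h,j))$ (e.g.\ the product coupling), mixed over $h$ with weights $\nu_h$; measurability holds because the assignment is determined by finitely many sign conditions. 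A further delicacy your sketch glosses over: a patch $x$ lying in the intersection of two budget planes can be selected on both budgets, and a utility with \emph{unique} maximizers must then demand the \emph{same} point on both, even though the given conditionals $P_j(\cdot \mid x)$ and $P_{j'}(\cdot \mid x)$ may differ. Matching such data forces one either to allow ties broken by budget-dependent selections (which the paper's $\argmax$ notation quietly presupposes) or to argue via approximation; this is exactly the kind of issue the proof in \citeasnoun{KS} is there to resolve, and it is why the paper cites rather than reproves the result.
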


\section{Bounds on Counterfactuals}

We next take a rationalizable stochastic demand system $(P_1,\dots,P_J)$ as given and ask what discipline it places on
$$y(p_0)  := \argmax_{y\in \mathbf{R}_+^K:p_0'y=1}  u(y), \quad u \sim P_u,
$$
the stochastic demand at some counterfactual budget $\cB_0$ corresponding to counterfactual price $p_0$.\footnote{For the very special case of $K=2$, \citeasnoun{HS15} provide closed-form bounds. \citeasnoun{BBC08} and many others provide bounds under slightly stronger, e.g. aggregation, assumptions.} As with nonstochastic demand, this discipline will typically take the form of bounds, although these are now on a distribution. They are tightly related to testing rationalizability because a distribution $P_0$ of demands on $\cB_0$ is inside the bounds iff $(P_0,...,P_J)$ are jointly rationalizable; thus, Theorem \ref{T1} implies an exact characterization of bounds on $P_0$ implied by knowledge of $(P_1,...,P_J)$.
We will now formally state this characterization.
 
Recall that the matrix $A$ in Theorem \ref{T1} is obtained for the set of observed budgets $(\mathcal B_1,...,\mathcal B_J)$.  We can apply the same algorithm to the augmented set of budgets  $(\mathcal B_0, \mathcal B_1,...,\mathcal B_J)$ to obtain patches on it and its vector representations: for completeness we write them 
\begin{equation}\label{eq:vec}
(x^*_{1|0},...,x^*_{I^*_0|0},x^*_{1|1},...,x^*_{I^*_J|J}).
\end{equation}
The patches for the original system $(\mathcal B_1,...,\mathcal B_J)$ remain unchanged in the augmented system if they do not intersect with $\mathcal B_0$.  Therefore if $\mathcal B_{j'} \cap \mathcal B_0=\emptyset$ holds for some $j' \in \{1,\dots,J\}$, then  $I_{j'}^* = I_j$ and    
$$
(x^*_{1|j'},\dots,x^*_{I^*_{j'}|j'}) = (x_{1|j'},\dots,x_{I_{j'}|j'})
$$ 
for such $j'$.  Moreover we can apply the algorithm discussed in Section \ref{sec:rational} to the augmented system  $(\mathcal B_0,\dots,\mathcal B_J)$ to obtain its rational demand matrix $A^* \in {\bf R}^{\left(\sum_{j=1}^{J+1}I_j^*\right) \times {H^*}}$, where $H^* \geq H$. Note that each row of $A^*$ corresponds to a patch in the new vector representation  \eqref{eq:vec}.  Once $A^*$ is obtained, we can define a probability vector $\nu^* \in \Delta^{H^* -1}$, now defined over the columns of $A^*$, and the choice probability vector $\pi^*$ for the patches  $\{ x^*_{1|0},...,x^*_{I^*_0|0},x^*_{1|1},...,x^*_{I^*_J|J}  \}$.
Note that the elements of $\pi^*$ corresponding to $\cup_{j=1}^J \mathcal B_j$ are observed, while the rest remain unobserved: the latter are counterfactual conditional probabilities.  To make this point clear, we write    
\begin{equation*}
A^*=\left[ 
\begin{array}{c}
A_0^* \\ \hdashline[2pt/2pt]
A_1^*%
\end{array}
\right] ,~~\pi^* =\left[ 
\begin{array}{c}
\pi_0^* \\ \hdashline[2pt/2pt]
\pi_1^*
\end{array}
\right],
\end{equation*}
where $A_0^*$ collects rows of $A^*$ that correspond to patches that do not belong to  $\cup_{j=1}^J \mathcal B_j$, $A_1^*$ collects all other patches, and similarly for $\pi^*$. It continues to be the case that $\pi^*$ is rationalizable iff $A^*\nu^*=\pi^*$ for some $\nu^* \in \Delta^{H^*-1}$. However, rather than taking $\pi^*$ to be observed and testing rationalizability, we take $\pi_1^*$ to be observed and $\pi_0^*$ to a vector of counterfactual probabilities to be accordingly constrained by the observed $\pi_1^*$. Formally:\footnote{For a setting like ours except that the universal choice set is finite and ``budgets" are subsets of it, \citeasnoun{Manski07} anticipates Theorem \ref{thm:2}. Our contribution lies in the connection to nonparametric demand, in laying the groundwork for Theorem \ref{thm:3} and its corollaries, and in the accompanying computational as well as statistical machinery.}  
\begin{thm}\label{thm:2}
A distribution $P_0$ is consistent with observed demands $(P_1,...,P_J)$ if, and only if, its implied value of $\pi_0^*$ fulfils
\begin{equation*}
A^* \nu^*=\left[ 
\begin{array}{c}
\pi_0^* \\ \hdashline[2pt/2pt]
\pi_1^*
\end{array}
\right]
\end{equation*}
for some $\nu^* \in \Delta^{H^*-1}$. Here, $\pi_1^*$ takes the value implied by $(P_1,\dots,P_J)$. In particular, the conditional distributions $P_0(\cdot|y \in x_{i|0}^*)$ (for all for all $i=1,\dots,I_0^*$ where this is defined) are not restricted.
\end{thm}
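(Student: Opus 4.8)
The plan is to obtain Theorem \ref{thm:2} as a direct application of Theorem \ref{T1} to the augmented budget system $(\mathcal{B}_0,\mathcal{B}_1,\dots,\mathcal{B}_J)$. The only genuine work is to (i) identify ``consistency'' of $P_0$ with joint rationalizability of the augmented system, (ii) correctly separate the data-determined entries of the vector representation from the counterfactual ones, and (iii) verify the ``in particular'' clause by appealing to the patch-invariance of revealed preference.

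First I would fix terminology: a candidate $P_0$ is \emph{consistent} with $(P_1,\dots,P_J)$ precisely when there exists a single distribution $P_u$ over locally nonsatiated utilities that simultaneously rationalizes $(P_0,P_1,\dots,P_J)$, i.e. when the augmented system is rationalizable in the sense of Section \ref{sec:rational}. This is the natural reading of a counterfactual: $P_0$ is admissible iff some random utility model compatible with what we observe also produces $P_0$ on $\mathcal{B}_0$. Since $A^*$ is \emph{defined} to be the rational demand matrix of $(\mathcal{B}_0,\dots,\mathcal{B}_J)$, Theorem \ref{T1} then applies verbatim: the augmented system is rationalizable iff its vector representation $\pi^*$, split into the blocks $\pi_0^*$ and $\pi_1^*$ as in the display preceding the theorem and ordered as in \eqref{eq:vec}, satisfies $A^*\nu^*=\pi^*$ for some $\nu^*\in\Delta^{H^*-1}$. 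Both directions of the desired equivalence are thus inherited from the iff in Theorem \ref{T1}, the forward direction because a rationalizing $P_u$ induces a distribution $\nu^*$ over choice types solving the system, and the converse because any such $\nu^*$ reconstructs a rationalizing $P_u$.

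Second, I would argue that the relevant blocks carry the intended content, namely that $\pi_1^*$ is pinned down by the data while $\pi_0^*$ is the value implied by the candidate. The point to handle carefully is that augmenting with $\mathcal{B}_0$ can subdivide the original patches, and that some augmented patches may lie on the intersections $\mathcal{B}_0\cap\mathcal{B}_j$. Nonetheless every entry of $\pi^*$ corresponding to a patch on $\cup_{j=1}^J\mathcal{B}_j$ equals the probability $P_j(\cdot)$ assigned by an \emph{observed} distribution to a measurable subset of $\mathcal{B}_j$, hence is known; so $\pi_1^*$ is a function of $(P_1,\dots,P_J)$ alone, and the remaining entries constitute the counterfactual $\pi_0^*$ read off from $P_0$. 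Substituting the known $\pi_1^*$ then yields exactly the stated characterization. The one bit of bookkeeping I would discharge explicitly is the treatment of patches shared across budgets, so that the observed/counterfactual split matches the partition $A^*=[A_0^{*\prime},A_1^{*\prime}]'$ claimed in the text.

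Third, for the unrestrictedness of the conditionals I would invoke the defining property of $\mathcal{X}$: all points of a single patch $x_{i|0}^*$ stand in identical on/strictly-above/strictly-below relation to every budget plane, so any utility whose maximizer on $\mathcal{B}_0$ lies somewhere in $x_{i|0}^*$ generates the same revealed-preference pattern and hence contributes to the same column of $A^*$. Consequently $A^*\nu^*=\pi^*$ constrains $P_0$ only through the patch totals $\pi_0^*$; given any feasible $\pi_0^*$, any prescribed family of within-patch conditionals $P_0(\cdot\mid y\in x_{i|0}^*)$ can be realized by mixing over utilities whose $\mathcal{B}_0$-maximizer is placed at the desired location inside the patch, without altering $\nu^*$. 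I expect the main obstacle to be precisely this construction together with its bookkeeping: one must exhibit, for (almost) every target point of a patch, a locally nonsatiated utility with maximizer at that point and the correct relations to all budgets, and one must handle the lower-dimensional, possibly non-closed patches on budget intersections so that the partition into patches and the conditioning events $\{y\in x_{i|0}^*\}$ are unambiguous. Modulo these measurability and boundary details, the statement follows immediately from Theorem \ref{T1} and the patch structure.
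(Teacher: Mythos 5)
Your proposal is correct and follows essentially the same route as the paper: the paper offers no separate proof of Theorem \ref{thm:2}, treating it as an immediate consequence of Theorem \ref{T1} applied to the augmented system $(\mathcal{B}_0,\dots,\mathcal{B}_J)$ with matrix $A^*$, with the observed/counterfactual split of $\pi^*$ handled in the text preceding the theorem and the ``in particular'' clause resting on the Section \ref{sec:rational} observation that rationalizability constrains only patch-level probabilities, never within-patch distributions. Your extra care about subdivided patches and the realizability of arbitrary within-patch conditionals makes explicit what the paper delegates to the cited construction of $A^*$ in \citeasnoun{KS}, but it is the same argument.
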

We next explain how this result translates into extremely tractable, best possible bounds on many parameters of interest. Specifically, we have:
\begin{thm}\label{thm:3}

For any known function $g:{\bf R}^K \mapsto {\bf R}$ that is bounded on $\cB_0$, define
\begin{eqnarray}
\underline{g}_{i|0} &\equiv &\inf_{y\in x_{i|0}^*}g(y),\quad 1\leq i\leq I_0^* \label{eq:def_glo}\\
\overline{g}_{i|0} &\equiv &\sup_{y\in x_{i|0}^*}g(y),\quad 1\leq i\leq I_0^*. \label{eq:def_ghi}
\end{eqnarray}%
Then the bounds
\begin{multline}
\min \bigl\{(\underline{g}_{1|0},\dots,\underline{g}_{I_0|0}) A_0^*\nu^*: A_1^*\nu^* =\pi_1^*,\nu^* \in \Delta^{H^*-1} \bigr\}  \\
\leq \mathbb{E}g(y(p_0))  \leq  \label{eq:C2}\\
\max \bigl\{(\overline{g}_{1|0},\dots,\overline{g}_{I_0|0}) A_0^*\nu^*: A_1^*\nu^* =\pi_1^*,\nu^* \in \Delta^{H^*-1} \bigr\} 
\end{multline}
are sharp, i.e. they cannot be improved upon without further information.
\end{thm}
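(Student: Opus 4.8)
The plan is to reduce the problem to the two linear programs in \eqref{eq:C2} by combining the law of total expectation with the exact characterization of feasible counterfactual patch probabilities in Theorem \ref{thm:2}. Throughout, write $\pi_0^* = (\pi_{1|0}^*,\dots,\pi_{I_0^*|0}^*)$ for the vector of counterfactual patch probabilities $\pi_{i|0}^* = \Pr(y(p_0)\in x_{i|0}^*)$, and recall that the patches $\{x_{i|0}^*\}_{i=1}^{I_0^*}$ partition $\cB_0$, so $y(p_0)$ lands in exactly one of them almost surely. By rationalizability of the observed system, the feasible set $\{\nu^*\geq 0:\mathbf{1}'\nu^*=1,\ A_1^*\nu^*=\pi_1^*\}$ is a nonempty, compact polytope.

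First I would decompose, for any candidate $P_0$,
\[
\mathbb{E}g(y(p_0)) = \sum_{i=1}^{I_0^*}\pi_{i|0}^*\,\mathbb{E}\bigl[g(y(p_0))\mid y(p_0)\in x_{i|0}^*\bigr],
\]
where the $i$-th term is read as $0$ whenever $\pi_{i|0}^*=0$. Each conditional mean $m_i:=\mathbb{E}[g(y(p_0))\mid y(p_0)\in x_{i|0}^*]$ necessarily lies in $[\underline{g}_{i|0},\overline{g}_{i|0}]$ by \eqref{eq:def_glo}--\eqref{eq:def_ghi}, and since $\pi_{i|0}^*\geq 0$ this already yields validity of the bounds: substituting $m_i\geq\underline{g}_{i|0}$ and writing $\pi_0^*=A_0^*\nu^*$, $\pi_1^*=A_1^*\nu^*$ via Theorem \ref{thm:2} shows that $\mathbb{E}g(y(p_0))$ is at least the value of the minimization program, and the upper bound follows symmetrically from $m_i\leq\overline{g}_{i|0}$. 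Because the feasible set is a compact polytope and the objective is linear, both optima are attained, so the displayed $\min$ and $\max$ are well defined.

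The substantive step is sharpness, and here I would invoke the second assertion of Theorem \ref{thm:2}: for any feasible $\nu^*$ the induced patch probabilities $\pi_0^*=A_0^*\nu^*$ are consistent with the data, and the conditional distributions $P_0(\cdot\mid y\in x_{i|0}^*)$ are \emph{entirely unrestricted}. This decouples the two optimizations: $m_i$ and $\pi_0^*$ may be chosen independently, so taking the optimizing $\nu^*$ of the minimization program and concentrating each conditional distribution on points of $x_{i|0}^*$ at which $g$ approaches $\underline{g}_{i|0}$ drives $\mathbb{E}g(y(p_0))$ down to the program's value. Hence no larger lower bound is valid; the upper bound is handled identically with $\overline{g}_{i|0}$.

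The one point that requires care — and the only real friction I anticipate — is that a patch $x_{i|0}^*$ need not be closed, so $\underline{g}_{i|0}$ and $\overline{g}_{i|0}$ may be extrema that are not attained within the patch by any point mass. I would resolve this by noting that, by definition of the infimum over $x_{i|0}^*$, there exist $y\in x_{i|0}^*$ with $g(y)$ arbitrarily close to $\underline{g}_{i|0}$; placing the conditional mass near such points makes $\mathbb{E}g(y(p_0))$ approach the program value to within any $\varepsilon>0$. This establishes that the bounds cannot be improved, which is exactly the asserted sharpness, even when they are approached rather than exactly achieved; when $g$ is continuous and the relevant extremum is attained on $x_{i|0}^*$ the bound is in fact achieved. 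The same remark explains the parenthetical restriction to indices $i$ for which the conditional distribution is defined, i.e. for which $\pi_{i|0}^*>0$.
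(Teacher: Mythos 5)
Your proposal is correct and follows essentially the same route as the paper: the law of iterated expectations decomposition $\mathbb{E}g(y(p_0)) = \sum_i \pi_{i|0}^*\,\mathbb{E}[g(y(p_0))\mid y\in x_{i|0}^*]$, the use of Theorem \ref{thm:2} to characterize feasible $\pi_0^* = A_0^*\nu^*$ with unrestricted conditional distributions, and the $\varepsilon$-approximation argument to handle infima/suprema not attained on (possibly non-closed) patches. The only difference is that the paper additionally records a mixture argument (convex combinations of consistent $P_0$'s remain consistent) to conclude that \emph{all} intermediate values of $\mathbb{E}g(y(p_0))$ are attainable, a slightly stronger conclusion than the sharpness of the endpoints that the theorem asserts and that your argument fully establishes.
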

\begin{proof}
By the Law of Iterated Expectations,
\begin{eqnarray}
\mathbb{E}g(y(p_0)) = \sum_{i=1}^{I_0} \pi_{i|0}^* \mathbb{E}(g(y(p_0))|y \in x_{i|0}^*) = (g_{1|0},\dots,g_{I_0^*|0}) \pi_0^*,
\label{eq:expected}
\end{eqnarray}
where $g_{i|0}\equiv \mathbb{E}(g(y(p_0))|y \in x_{i|0}^*)$ if $\pi_{i|0}^* \neq 0$ and otherwise we assign it an arbitrary value.   By inspection of \eqref{eq:def_glo}, \eqref{eq:def_ghi}, and \eqref{eq:expected}, the upper and lower bounds are valid and can be approached arbitrarily closely. Furthermore, if distributions $P_0$ and $Q_0$ are consistent with observable demands $(P_1,\dots,P_J)$, then so is any mixture between them. Hence, all values strictly between the bounds are attained by appropriate mixtures of distributions that approximate the bounds. 
\end{proof}

The proof reveals not only that the bounds are sharp, but also that all intermediate values of $\mathbb{E}g(y(p_0))$ are necessarily attainable. Whether the bounds themselves are attainable depends on whether patches are open or closed in the relevant directions and can only be decided on a case-by-case basis.   

Computing these bounds requires to solve the linear programs in \eqref{eq:C2} and, as an input, the optimization problems in \eqref{eq:def_glo}-\eqref{eq:def_ghi}. The latter are tractable in relevant cases: If $g$ is continuous, the constraint sets can be taken to be the closures of patches, hence finite polytopes. If $g$ is furthermore linear, then  computing the bounds requires only linear programming, though possibly with many constraints. 

We further elaborate this result by more explicitly bounding the expected value and c.d.f. of $z'y(p_0)$, where $z \in {\bf R} ^K$ is a user-specified vector. For example, $z=(1,0,\dots,0)'$ extracts demand for good $1$ and $z=(p_0^{[1]},p_0^{[2]},0,\dots,0)'$ (i.e., the first two components of $p_0$ followed by zeroes) extracts joint expenditure on the first two goods. Theorem \ref{thm:3} then specializes as follows.
\begin{cor}\label{thm:4}
Let
\begin{eqnarray*}
\underline{m}_{i|0}(z) &\equiv &\inf \{z'y:y\in x_{i|0}^*\},\quad 1\leq i\leq I_0^* \\
\overline{m}_{i|0}(z) &\equiv &\sup \{z'y:y\in x_{i|0}^*\},\quad 1\leq i\leq I_0^*.
\end{eqnarray*}
Then the bounds
\begin{multline*}
\min \{(\underline{m}_{1|0}(z),\dots,\underline{m}_{I_0|0}(z)) A_0^*\nu^*: A_1^*\nu^* =\pi_1^*,\nu^* \in \Delta^{H^*-1} \} \\
\leq \mathbb{E}(z'y(p_0))  \leq \\
\max \{(\overline{m}_{1|0}(z),\dots,\overline{m}_{I_0|0}(z)) A_0^*\nu: A_1^*\nu^* =\pi_1^*,\nu^* \in \Delta^{H^*-1} \}
\end{multline*}
are sharp.
\end{cor}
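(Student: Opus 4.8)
The plan is to obtain Corollary~\ref{thm:4} as a direct specialization of Theorem~\ref{thm:3}, taking the generic bounded function $g$ there to be the linear functional $g(y)=z'y$. The entire content of the argument is to check that this particular $g$ satisfies the hypothesis of Theorem~\ref{thm:3} and that, under this choice, the quantities in \eqref{eq:def_glo}--\eqref{eq:def_ghi} and the two linear programs in \eqref{eq:C2} reduce verbatim to those stated in the corollary.

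First I would verify that $g(y)=z'y$ meets the maintained requirement of being bounded on $\cB_0$. Since $z$ is a fixed vector, $g$ is linear and hence continuous, and each patch $x_{i|0}^*$ is contained in $\cB_0$. Recalling the earlier remark that the closure of any patch is a finite polytope, $z'y$ attains a finite infimum and supremum over the compact closure of each $x_{i|0}^*$; by continuity these coincide with the infimum and supremum taken over $x_{i|0}^*$ itself. Hence $g$ is bounded on $\cB_0$ and Theorem~\ref{thm:3} applies with no modification.

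Next I would simply read off the specialization. With $g(y)=z'y$, definitions \eqref{eq:def_glo} and \eqref{eq:def_ghi} become
\[
\underline{g}_{i|0}=\inf_{y\in x_{i|0}^*}z'y=\underline{m}_{i|0}(z),\qquad \overline{g}_{i|0}=\sup_{y\in x_{i|0}^*}z'y=\overline{m}_{i|0}(z),
\]
for every $1\leq i\leq I_0^*$. Substituting these row vectors into the lower (min) and upper (max) programs of \eqref{eq:C2}, and noting $\mathbb{E}g(y(p_0))=\mathbb{E}(z'y(p_0))$, produces exactly the display in the corollary. Sharpness is inherited directly from Theorem~\ref{thm:3}: the Law of Iterated Expectations argument shows each bound can be approached arbitrarily closely, and the mixing argument shows that every intermediate value of $\mathbb{E}(z'y(p_0))$ is attained.

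There is essentially no obstacle to overcome here, since the corollary is a pure instantiation; the only point requiring even minimal care is the boundedness check, which is immediate from the polytope structure of patch closures. I would additionally remark, as the discussion following Theorem~\ref{thm:3} already anticipates, that because $g(y)=z'y$ is linear the auxiliary optimizations defining $\underline{m}_{i|0}(z)$ and $\overline{m}_{i|0}(z)$ are themselves linear programs over the (polytope) closures of the patches, so the entire computation collapses to linear programming.
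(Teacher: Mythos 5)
Your proposal is correct and matches the paper's approach exactly: the paper presents Corollary \ref{thm:4} as an immediate specialization of Theorem \ref{thm:3} with $g(y)=z'y$, which is precisely your instantiation, and your boundedness check via the compact polytope closures of the patches is the (implicit) justification the paper relies on. Nothing further is needed.
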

We note that computation of these bounds only requires linear programming. Next, we bound probabilities of arbitrary events and hence also c.d.f.'s.
\begin{cor}\label{thm:5}
For fixed event $x \subseteq \cB_0$, the bounds
\begin{multline*}
\min \biggl\{ \sum_{\substack{ i\in \{1,...,I_0^*\}: \\ x_{i|0}^*\subseteq x }} e_i' A_0^*\nu :A_1^*\nu^* =\pi_1^*,\nu^* \in \Delta^{H^*-1} \biggr\} \\
\leq \Pr(y(p_0) \in x) \leq \\
\max \biggl\{ \sum_{\substack{ i\in \{1,...,I_0^*\}: \\ x_{i|0}^* \cap x \neq \emptyset}} e_i' A_0^*\nu :A_1^*\nu^*= \pi_1^*,\nu^* \in \Delta^{H^*-1} \biggr\}
\end{multline*}
are sharp. Here, $e_i$ is the $i$'th canonical basis vector in ${\bf R} ^{I_0^*}$.

For fixed vector $z \in {\bf R} ^K$, let
\begin{eqnarray*}
\underline{p}_{i|0}(z,t) &\equiv &\bm{1}\{\overline{m}_{i|0}(z)\leq t\} \\
\overline{p}_{i|0}(z,t) &\equiv &\bm{1}\{x_{i|0}^* \cap \{z'y=t\} \neq \emptyset \},
\end{eqnarray*}
noting that $\underline{m}_{i|0}(z)<t \Rightarrow \overline{p}_{i|0}(z,t)=1$ and $\underline{m}_{i|0}(z)>t \Rightarrow \overline{p}_{i|0}(z,t)=0$. 
Then the following bounds on the c.d.f. of $z'y(p_0)$ are sharp:
\begin{multline*}
\min \bigl\{ (\underline{p}_{1|0}(z,t),\dots,\underline{p}_{I_0^*|0}(z,t)) A_0^*\nu^* :A_1^*\nu^* =\pi_1^*,\nu^* \in \Delta^{H^*-1} \bigr\} \\
\leq \Pr(z'y(p_0) \leq t) \leq \\
\max \bigl\{ (\overline{p}_{1|0}(z,t),\dots,\overline{p}_{I_0^*|0}(z,t)) A_0^*\nu^* :A_1^*\nu^* =\pi_1^*,\nu^* \in \Delta^{H^*-1} \bigr\}.
\end{multline*}
\end{cor}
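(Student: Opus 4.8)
The plan is to obtain both families of bounds as instances of Theorem~\ref{thm:3}, taking $g$ to be an indicator function. Writing $\Pr(y(p_0)\in x)=\E\,\bm 1\{y(p_0)\in x\}$ and $\Pr(z'y(p_0)\le t)=\E\,\bm 1\{z'y(p_0)\le t\}$, each target is the expectation of a bounded function on $\cB_0$, so Theorem~\ref{thm:3} applies once the coefficient vectors $(\underline g_{i|0})_i$ and $(\overline g_{i|0})_i$ are evaluated for the relevant indicator. I would first handle a general event $x$, then specialize $x$ to the half-space $\{z'y\le t\}$ to read off the c.d.f. bounds.

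For the first display, put $g=\bm 1\{\cdot\in x\}$. Then $\underline g_{i|0}=\inf_{y\in x^*_{i|0}}\bm 1\{y\in x\}$ equals $1$ precisely when every point of $x^*_{i|0}$ lies in $x$, so $\underline g_{i|0}=\bm 1\{x^*_{i|0}\subseteq x\}$; symmetrically $\overline g_{i|0}=\sup_{y\in x^*_{i|0}}\bm 1\{y\in x\}=\bm 1\{x^*_{i|0}\cap x\ne\emptyset\}$. Feeding these $0/1$ coefficients into \eqref{eq:C2}, the inner products $(\underline g_{1|0},\dots)A_0^*\nu^*$ and $(\overline g_{1|0},\dots)A_0^*\nu^*$ collapse to the stated sums $\sum_{i:\,x^*_{i|0}\subseteq x}e_i'A_0^*\nu^*$ and $\sum_{i:\,x^*_{i|0}\cap x\ne\emptyset}e_i'A_0^*\nu^*$, and sharpness is inherited verbatim from Theorem~\ref{thm:3}.

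For the c.d.f. I would apply the first part to $x=\{y\in\cB_0:z'y\le t\}$. Here $x^*_{i|0}\subseteq x$ iff $\sup_{y\in x^*_{i|0}}z'y\le t$, i.e. iff $\overline m_{i|0}(z)\le t$, which is exactly $\underline p_{i|0}(z,t)$, the lower-bound coefficient. The upper-bound coefficient is $\bm 1\{x^*_{i|0}\cap x\ne\emptyset\}$, the indicator that some point of the patch meets the closed half-space $\{z'y\le t\}$; the task is to show it equals $\overline p_{i|0}(z,t)$. The hard part will be the tangential case. When $\underline m_{i|0}(z)<t$ there is a point of the patch with $z'y<t$, so the coefficient is $1$; when $\underline m_{i|0}(z)>t$ every point has $z'y>t$, so it is $0$; these are the two implications stated after the definition. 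The only delicate case is $\underline m_{i|0}(z)=t$, where the coefficient is $1$ iff the infimum is attained inside the patch, i.e. iff $x^*_{i|0}\cap\{z'y=t\}\ne\emptyset$, which is the displayed definition of $\overline p_{i|0}(z,t)$. To settle this case I would use the remark from Section~\ref{sec:rational} that the closure of each patch is a finite polytope, so $z'y$ attains its infimum on the closure and the sole ambiguity is whether that minimizing boundary face belongs to the (possibly open) patch itself. Given the correct coefficients, sharpness again follows from Theorem~\ref{thm:3}: because Theorem~\ref{thm:2} leaves the within-patch conditional distribution of $y(p_0)$ entirely free, the extreme conditional probability ($0$ or $1$) encoded by each coefficient is attainable—or, where the relevant boundary point lies outside the patch, approachable—by concentrating mass on the appropriate side of $\{z'y=t\}$.
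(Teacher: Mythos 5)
Your proposal is correct and takes essentially the same route the paper intends: the corollary is stated without separate proof precisely because it is the specialization of Theorem \ref{thm:3} to the indicator functions $g=\bm{1}\{\cdot\in x\}$ and $g=\bm{1}\{z'y\le t\}$, which is exactly what you carry out. Your handling of the tangential case $\underline{m}_{i|0}(z)=t$ --- reading the upper coefficient as intersection with the closed half-space $\{z'y\le t\}$, which coincides with the displayed hyperplane definition of $\overline{p}_{i|0}(z,t)$ exactly in that case and with the two stated implications elsewhere --- is the same resolution the paper has in mind (cf.\ its footnote on the ``minimal additional check'' when $\underline{m}_{i|0}(z)=t$).
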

The bounds on the c.d.f. require essentially only linear programming, with a minimal additional check in the finitely many cases where $\underline{m}_{i|0}(z)=t$.\footnote{This case occurs when $\{z'y=t\}$ is a lower supporting hyperplane of patch $x_{i|0}$ but does not intersect it.} They are pointwise but not uniform in $t$; in particular, their upper and lower envelopes do not necessarily describe feasible counterfactual distributions.\footnote{Indeed, they may not even be c.d.f.'s for lack of right-continuity, though they can always be approximated by c.d.f.'s.} Therefore, while the upper and lower envelopes induce bounds on a multitude of more complicated parameters \cite{Stoye10}, those bounds are not in general tight. 

\section{Concluding Remarks}

We conclude by mentioning some connections and extensions.

First, we considered the case of one counterfactual budget for expositional clarity. Bounds on the joint c.d.f. of demand on two counterfactual budgets, or on some linear combination of expected values, are straightforward extensions of the above results. In general, they can be considerably tighter than the Cartesian product of budget-by-budget bounds and also need not include the budget-by-budget minimum and maximum bound. For the case of a single (possibly fictitious, e.g. representative) nonstochastic utility maximizer, see \citeasnoun{Adams16} for a much more extensive analysis in this spirit.

Next, these results naturally extend to finite discrete choice settings, i.e. if choices from distinct subsets $\mathcal{C}_1,\dots,\mathcal{C}_J$ of a finite choice universe $\mathcal{X}$ (the duplication of notation is intended) were observed and choices from another such subset $\mathcal{C}_0$ are to be predicted. In this case, the finitely many elements of $\mathcal{X}$ directly play the role of patches, the conditional distributions on patches are trivial, and Theorem 2 characterizes those  p.m.f.'s of counterfactual random choice that are consistent with observed choice distributions. That said, the analysis in \citeasnoun{Manski07} anticipates Theorem 2 in this setting. Of course, the result also applies to the further specialization where all observed choice sets are binary, as in the ``linear polytope" literature in mathematical psychology \cite{fishburn92}.

Finally, we developed population-level bounds but ignored estimation and inference. To handle this, note that if one takes $(p_0,\dots,p_J)$ and therefore $A^*$ to be known, then all the above bounds maximize or minimize $\gamma A_0^* \nu^*$ for some known vector $\gamma$; it is only $\pi_1^*$ that must be estimated. This is essentially the estimation and inference problem analyzed in Section 4.2 of \citeasnoun{DKQS16}.

\bibliographystyle{econometrica}
\bibliography{mybib}
\end{document}